\documentclass[sigconf,nonacm]{acmart}

\setcopyright{none}
\renewcommand\footnotetextcopyrightpermission[1]{}
\acmConference{}{}{}
\acmDOI{}
\acmISBN{}

\usepackage{amsmath}

\usepackage{amssymb}
\usepackage{amsthm}
\usepackage{booktabs}
\usepackage{algorithm}
\usepackage{algpseudocode}
\usepackage{tikz}
\usepackage{url}
\hypersetup{hidelinks}
\usetikzlibrary{shapes,arrows,positioning,calc,decorations.pathreplacing,fit,backgrounds}

\newtheorem{definition}{Definition}
\newtheorem{theorem}{Theorem}

\newtheorem{criterion}{Criterion}

\usepackage{microtype}

\begin{document}

\title{Witnessd: Proof-of-Process via Adversarial Collapse}

\author{David Condrey}
\affiliation{%
  \institution{Writerslogic Inc}
  \city{San Diego}
  \state{CA}
  \country{USA}
}

\begin{abstract}
Digital signatures prove key possession, not that the signer created the content. Detection systems classify content as authentic or inauthentic, but presume a ground truth oracle that adversarial settings lack. We propose a different criterion: evidence systems should be evaluated by how effectively they \emph{collapse the space of plausible alternative explanations}.

We introduce the \emph{jitter seal}, a proof-of-process primitive that creates cryptographic evidence of physical keyboard interaction. The system injects imperceptible microsecond delays (500--3000\,$\mu$s) derived via HMAC from a session secret, keystroke ordinal, and document state. Valid evidence could only exist if real keystrokes produced the document through those intermediate states.

We present Witnessd, an architecture combining jitter seals with Verifiable Delay Functions, external timestamp anchors (Bitcoin, RFC~3161), dual-source keystroke validation, and optional TPM attestation. Each component provides an independent verification checkpoint; disputing authentic evidence requires alleging coordinated compromise across multiple trust boundaries.

The system does not prevent forgery---a kernel-level adversary can defeat it. But such an adversary must make \emph{specific allegations}: naming a mechanism, identifying a time window, and implying testable claims. We formalize this as the \emph{Adversarial Collapse Principle} and evaluate across 31,000 jitter seal trials (100\% valid verification, 0\% forgery success) and 700+ system tests.
\end{abstract}

\keywords{proof-of-process, adversarial collapse, falsifiable evidence, verifiable delay functions, keystroke dynamics}

\maketitle

\section{Introduction}
\label{sec:intro}

Machine-generated evidence is increasingly contested. When a party claims ``this document existed at time $T$'' or ``this content was produced through process $P$,'' how should such claims be evaluated?

Current approaches fall into three categories:

\textbf{Detection} systems classify content as authentic or inauthentic. But detection presumes a ground truth oracle that adversarial settings lack. A ``95\% accurate'' detector is meaningless when the adversary controls whether to be in the 5\%.

\textbf{Certification} relies on trusted third parties. But trust is social, not technical. Certifiers can be wrong, compromised, or coerced.

\textbf{Cryptographic integrity} attests that bits were not modified after signing. But bit integrity is not process integrity. ``Signed by key $K$'' does not establish ``authored by human $H$.''

We propose a different framing: evidence systems should be evaluated not by detection accuracy or certification authority, but by how effectively they \emph{collapse the space of plausible alternative explanations}.

\subsection{The Gap Between Ownership and Process}

Alice signs a document with her private key. The signature proves she possesses the key and attests to the content. It does not prove she wrote it. She could have generated text with a language model, constructed intermediate states post-hoc, and signed each hash. A valid signature chain over document hashes $H_1, H_2, \ldots, H_n$ demonstrates commitment to each state---not that physical interaction produced those states.

This gap affects all signature-based provenance. As AI-generated text becomes indistinguishable from human writing, signature chains prove nothing about whether a human produced the content.

\subsection{Proof-of-Process as Primitive}

\begin{definition}[Proof-of-Process]
A \emph{proof-of-process} is cryptographic evidence that a physical process---not just a signing key---produced a digital artifact.
\end{definition}

Where proof-of-work demonstrates computational resources were expended, proof-of-process demonstrates physical interaction occurred. Table~\ref{tab:primitives} positions this primitive.

\begin{table}[ht]
\centering
\small
\begin{tabular}{@{}ll@{}}
\toprule
\textbf{Primitive} & \textbf{What It Proves} \\
\midrule
Digital signature & Key possession + content attestation \\
Proof-of-work & Computational resources expended \\
Verifiable delay function & Sequential computation time elapsed \\
\textbf{Proof-of-process} & \textbf{Physical interaction produced artifact} \\
\bottomrule
\end{tabular}
\caption{Cryptographic primitive categories.}
\label{tab:primitives}
\end{table}

\subsection{Contributions}

This paper contributes:

\begin{enumerate}
    \item The \textbf{Adversarial Collapse Principle} as an evaluation criterion for evidence systems (\S\ref{sec:adversarial})
    \item The \textbf{jitter seal}, a proof-of-process primitive with formal security arguments (\S\ref{sec:jitter})
    \item The \textbf{Witnessd architecture} combining verification checkpoints with independent trust assumptions (\S\ref{sec:architecture})
    \item \textbf{Security analysis} of residual attacks and required allegations (\S\ref{sec:security})
    \item \textbf{Evaluation} from implementations with 31,000+ verification trials (\S\ref{sec:eval})
\end{enumerate}

\subsection{Scope and Non-Goals}

The jitter seal proves typing occurred. It does \emph{not} prove:
\begin{itemize}
    \item The typist originated the ideas
    \item The typist did not copy from a source
    \item The typist did not type AI-generated content
\end{itemize}

These are fundamental limits of any process-based proof. A human who types AI output produces valid evidence. The system attests to physical process, not cognitive origin.

\section{Threat Model}
\label{sec:threat}

\subsection{Capability Stratification}

We stratify adversaries by capability level:

\begin{table}[ht]
\centering
\small
\begin{tabular}{@{}lll@{}}
\toprule
\textbf{Level} & \textbf{Capability} & \textbf{Example} \\
\midrule
$C_1$ & User-space access & Scripts, automation \\
$C_2$ & Administrator & Clock manipulation, logs \\
$C_3$ & Kernel & Driver injection, memory \\
$C_4$ & Infrastructure & External anchors, TSAs \\
\bottomrule
\end{tabular}
\caption{Capability stratification.}
\label{tab:capabilities}
\end{table}

Each evidence component forces allegations at a specific level; the architecture's value lies in requiring \emph{different} levels for different components.

\subsection{Adversary Classes}

\begin{table}[ht]
\centering
\small
\begin{tabular}{@{}lll@{}}
\toprule
\textbf{Adversary} & \textbf{Goal} & \textbf{Capability} \\
\midrule
Hostile verifier & Discredit evidence & Post-hoc examination \\
Hostile environment & Produce false evidence & Admin access, clocks \\
Post-hoc disputant & Create doubt & Vague allegations \\
\bottomrule
\end{tabular}
\caption{Adversary classes.}
\label{tab:adversaries}
\end{table}

\textbf{What we do not assume.} We do not assume trusted hardware, a trusted operating system, or trusted third parties. We assume only that standard cryptographic primitives are secure and that public external anchors are not simultaneously compromised.

\textbf{The goal.} We do not claim to prevent attacks. We claim to \emph{force specific allegations}: challenging the evidence requires naming a concrete mechanism, identifying a bounded time window, and making independently testable claims.

\section{The Adversarial Collapse Principle}
\label{sec:adversarial}

\subsection{Specific Allegation Criterion}

\begin{criterion}[Specific Allegation]
An allegation is \emph{specific} if and only if it:
\begin{enumerate}
    \item Names a concrete mechanism
    \item Identifies a bounded time window
    \item Implies a capability level
    \item Is independently testable
\end{enumerate}
\end{criterion}

``The document could have been backdated'' is not specific. ``The clock was set back between 14:00--14:47, requiring $C_2$ access to host X'' is specific.

\subsection{Adversarial Collapse}

\begin{criterion}[Adversarial Collapse]
Evidence produces \emph{adversarial collapse} when any alternative explanation requires a conjunction of specific allegations against components with \emph{independent} trust assumptions.
\end{criterion}

\begin{definition}[Trust Independence]
Two components $A$ and $B$ have \emph{independent trust assumptions} if:
\begin{enumerate}
    \item No single party controls both verification paths
    \item No shared secret enables compromise of both
    \item No common vulnerability affects both
    \item Compromise of $A$ does not enable compromise of $B$
\end{enumerate}
\end{definition}

\textbf{Where independence holds:} Local hash chain integrity (verified by recomputation) is independent of Bitcoin anchor validity (verified against public blockchain). TPM attestation (hardware root of trust) is independent of RFC~3161 TSA receipts (third-party PKI).

\subsection{Why Single Primitives Fail}

\begin{table}[ht]
\centering
\small
\begin{tabular}{@{}lll@{}}
\toprule
\textbf{Primitive} & \textbf{Why Insufficient} & \textbf{Vague Doubt} \\
\midrule
Keystroke timing & Replay, synthesis & ``Could be synthesized'' \\
Crypto integrity & Bit $\neq$ process & ``Proves bits, not origin'' \\
Certification & Trust assumption & ``Certifier wrong'' \\
Detection & No oracle & ``False negative'' \\
\bottomrule
\end{tabular}
\caption{Why single primitives fail adversarial collapse.}
\label{tab:primitives-fail}
\end{table}

Each primitive alone permits vague doubt. In combination, disputing any one requires specific allegations; disputing all requires allegations against \emph{independent} components.

\section{The Jitter Seal}
\label{sec:jitter}

The jitter seal is the proof-of-process layer. It creates cryptographic evidence that real-time keyboard interaction occurred during document creation.\footnote{Implementation: \url{https://github.com/writerslogic/physjitter}}

\subsection{Protocol Specification}

\begin{algorithm}
\caption{$\mathsf{Setup}(1^\lambda) \rightarrow (S, \mathsf{params})$}
\begin{algorithmic}[1]
\State $S \gets_\$ \{0,1\}^{256}$ \Comment{Session secret}
\State $\mathsf{params} \gets (N, J_{\min}, J_{\max})$ \Comment{Sampling interval, jitter range}
\State \Return $(S, \mathsf{params})$
\end{algorithmic}
\end{algorithm}

\begin{algorithm}
\caption{$\mathsf{Sample}(S, i, H_i, t_i, Z_i, B_i, J_{i-1}) \rightarrow (J_i, \sigma_i)$}
\begin{algorithmic}[1]
\State $\mathsf{mac} \gets \mathsf{HMAC\text{-}SHA256}(S)$
\State $\mathsf{mac}.\mathsf{update}(i)$ \Comment{8-byte ordinal}
\State $\mathsf{mac}.\mathsf{update}(H_i)$ \Comment{32-byte document hash}
\State $\mathsf{mac}.\mathsf{update}(t_i)$ \Comment{8-byte timestamp}
\State $\mathsf{mac}.\mathsf{update}(Z_i)$ \Comment{1-byte zone transition}
\State $\mathsf{mac}.\mathsf{update}(B_i)$ \Comment{1-byte interval bucket}
\State $\mathsf{mac}.\mathsf{update}(J_{i-1})$ \Comment{4-byte previous jitter}
\State $\mathsf{raw} \gets \mathsf{u32}(\mathsf{mac}.\mathsf{finalize}()[0..4])$
\State $J_i \gets J_{\min} + (\mathsf{raw} \mod (J_{\max} - J_{\min}))$
\State $\sigma_i \gets \mathsf{SHA256}(\mathsf{prefix} \| i \| t_i \| H_i \| J_i \| \sigma_{i-1})$
\State \Return $(J_i, \sigma_i)$
\end{algorithmic}
\end{algorithm}

\begin{algorithm}
\caption{$\mathsf{Verify}(S, \mathsf{evidence}) \rightarrow \{\mathsf{accept}, \mathsf{reject}\}$}
\begin{algorithmic}[1]
\For{each sample $(i, t_i, H_i, Z_i, B_i, J_i, \sigma_i)$}
    \State $(J'_i, \_) \gets \mathsf{Sample}(S, i, H_i, t_i, Z_i, B_i, J_{i-1})$
    \If{$J_i \neq J'_i$}
        \State \Return $\mathsf{reject}$
    \EndIf
    \If{$\sigma_i \neq \mathsf{SHA256}(\ldots)$}
        \State \Return $\mathsf{reject}$
    \EndIf
\EndFor
\State \Return $\mathsf{accept}$
\end{algorithmic}
\end{algorithm}

\subsection{Cryptographic Binding}

Each jitter value is bound to six components:

\begin{table}[ht]
\centering
\small
\begin{tabular}{@{}ll@{}}
\toprule
\textbf{Input} & \textbf{Binding Property} \\
\midrule
Session secret $S$ & Unforgeability \\
Keystroke ordinal $i$ & Ordering \\
Document hash $H_i$ & Document-binding \\
Zone transition $Z_i$ & Physical input pattern \\
Interval bucket $B_i$ & Typing rhythm \\
Previous jitter $J_{i-1}$ & Chain integrity \\
\bottomrule
\end{tabular}
\caption{Cryptographic binding of jitter seal components.}
\label{tab:binding}
\end{table}

The zone transition $Z_i$ encodes which keyboard region typed consecutive characters. We partition the keyboard into 8 zones based on touch-typing finger assignments. Zone transitions capture typing behavior without revealing content---each zone contains 3--6 keys.

The interval bucket $B_i$ quantizes inter-keystroke intervals into 10 bins:
$$B_i = \min\left(\left\lfloor \frac{t_i - t_{i-1}}{50\mathsf{ms}} \right\rfloor, 9\right)$$

\subsection{Security Properties}

\begin{definition}[Unforgeability]
The jitter seal is \emph{unforgeable} if, for all PPT adversaries $\mathcal{A}$:
$$\Pr[\mathsf{Verify}(S, \mathcal{A}(\mathsf{params}, \{H_i\})) = \mathsf{accept}] \leq \mathsf{negl}(\lambda)$$
where $\mathcal{A}$ has access to document hashes but not session secret $S$.
\end{definition}

\begin{definition}[Document-Binding]
Evidence for document $D$ does not verify for document $D' \neq D$:
$$\Pr[\mathsf{Verify}(S, \mathsf{evidence}_D, D') = \mathsf{accept} \land D \neq D'] \leq \mathsf{negl}(\lambda)$$
\end{definition}

\begin{theorem}[Unforgeability]
For any PPT adversary $\mathcal{A}$ making at most $q$ queries:
$$\Pr[\mathcal{A} \text{ wins}] \leq \frac{q}{2^{256}} + \left(\frac{1}{R}\right)^n$$
where $R = J_{\max} - J_{\min}$ is the jitter range.
\end{theorem}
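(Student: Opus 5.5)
The bound follows from a standard game-hopping argument: first replace $\mathsf{HMAC\text{-}SHA256}(S,\cdot)$ by a truly random function, then bound the adversary's success against that random function. Throughout I read ``$\mathcal{A}$ wins'' as: $\mathcal{A}$ outputs evidence with $n$ samples $(i,t_i,H_i,Z_i,B_i,J_i,\sigma_i)$ that $\mathsf{Verify}(S,\cdot)$ accepts, where $\mathcal{A}$ sees $\mathsf{params}$ and the $\{H_i\}$ but not $S$. I read the $q$ queries as oracle queries, either to $\mathsf{Sample}$ on inputs other than the forged tuples, or offline guesses of $S$. The $\sigma_i$ checks contribute nothing to the bound, because $\sigma_i$ is an unkeyed SHA-256 chain that $\mathcal{A}$ can always recompute. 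So the whole argument concerns the $J_i$ checks.

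\textbf{Game 0 to Game 1 (key guessing).} Game~0 is the real experiment. In Game~1, $\mathcal{A}$ wins outright if it ever produces a candidate equal to $S$, for example by offline exhaustive search that checks each candidate against observed jitters. Since $S$ is uniform on $\{0,1\}^{256}$, each guess succeeds with probability $2^{-256}$, and a union bound over $q$ guesses gives the additive term $q/2^{256}$. Conditioned on $S$ never being guessed, I replace HMAC with a random function $F$. This costs only a PRF advantage, which I treat as negligible (absorbed into, or assumed dominated by, the stated terms under the standard assumption that HMAC-SHA256 is a PRF).

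\textbf{Game 1 (random-function forgery).} In the final game each $J_i = J_{\min} + (F(x_i)[0..4] \bmod R)$, where $x_i = (i,H_i,t_i,Z_i,B_i,J_{i-1})$. For a fresh input $x_i$ that was never queried, $F(x_i)$ is uniform and independent of everything $\mathcal{A}$ has seen, so $\mathcal{A}$'s claimed $J_i$ is correct with probability about $1/R$. The inputs $x_1,\dots,x_n$ are pairwise distinct because the ordinal $i$ differs. They are also fresh, since a winning forgery must not reuse queried tuples; otherwise it is a replay, which the model excludes via document-binding. Hence the $n$ events are independent, and multiplying them gives $(1/R)^n$. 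The chaining through $J_{i-1}$ does not weaken this: the adversary's guesses may be adaptive, but each new $F$ value is fresh and uniform given the past, so a conditional-probability chain still yields the product.

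\textbf{Main obstacle.} The delicate step is exactly-uniform $1/R$. The value $\mathsf{raw} \bmod R$ with $\mathsf{raw}$ uniform on $2^{32}$ values carries modulo bias, so each match probability is at most $\lceil 2^{32}/R\rceil/2^{32} \le (1/R)(1+R/2^{32})$. Over $n$ samples this gives $(1/R)^n(1+R2^{-32})^n$. I would handle this by stating that for $R\le 2500$ the factor is $1+o(1)$ and absorbing it into the approximation, or else by noting that the bound holds exactly if $R$ divides $2^{32}$. Another point to make explicit is that the freshness of each $x_i$ must be part of the winning condition; I would state this as the assumption that replaying a $\mathsf{Sample}$ output verbatim is not counted as a forgery. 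Combining the two games by a union bound gives $\Pr[\mathcal{A}\text{ wins}] \le q/2^{256} + (1/R)^n$, up to the negligible PRF advantage.
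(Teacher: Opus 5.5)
Your proposal is the paper's argument recast as a game hop. The paper's two-line proof says the same thing: the adversary either recovers $S$ (the $q/2^{256}$ term, which the paper attributes to HMAC-PRF security) or, facing an effectively random function, must guess each of the $n$ jitter values with probability about $1/R$. The caveats you flag---the additive PRF advantage, the modulo-bias factor $(1+R\cdot 2^{-32})^n$ (which does not vanish here since $2500 \nmid 2^{32}$), and the need for pairwise distinct, unqueried sample inputs---are genuine and are passed over in the paper, so both versions establish the stated bound only up to those terms.
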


\begin{proof}
Without the session secret $S$, the adversary must either recover $S$ from HMAC outputs (probability $\leq q/2^{256}$ by HMAC-PRF security~\cite{bellare1996}) or guess valid jitter values (probability $(1/R)^n$ for $n$ samples). With $R = 2500$ and $n = 100$: $(1/2500)^{100} < 2^{-1000}$.
\end{proof}

\subsection{What Jitter Seals Prove}

\begin{table}[ht]
\centering
\small
\begin{tabular}{@{}ll@{}}
\toprule
\textbf{Captured} & \textbf{Not Captured} \\
\midrule
Keystroke count & Which specific key \\
Timestamp & Character produced \\
Document hash & Document content \\
Jitter value & Cursor position \\
Zone transition & Application context \\
\bottomrule
\end{tabular}
\caption{Privacy-preserving evidence collection.}
\label{tab:privacy}
\end{table}

At no point does any component have simultaneous access to keystroke identity and persistent storage.

\section{Witnessd Architecture}
\label{sec:architecture}

The Witnessd architecture combines multiple verification checkpoints with independent trust assumptions.

\subsection{Verification Checkpoints}

\begin{table}[ht]
\centering
\small
\begin{tabular}{@{}lll@{}}
\toprule
\textbf{Layer} & \textbf{Property} & \textbf{Disputing Requires} \\
\midrule
0: Jitter seal & Process proof & Secret theft or typing \\
1: VDF proofs & Timing lower bound & VDF forgery \\
2: External anchors & Time upper bound & Third-party compromise \\
3: Dual-source & Hardware origin & Kernel compromise \\
4: TPM binding & Platform attestation & Hardware tampering \\
5: Hash chain & Append-only & Collision or DB access \\
\bottomrule
\end{tabular}
\caption{Verification checkpoints and required allegations.}
\label{tab:checkpoints}
\end{table}

\subsection{Layer 1: Verifiable Delay Functions}

VDFs prove minimum elapsed time between checkpoints. Given input $x$, a VDF computes output $y$ such that computing $y$ requires sequential work proportional to parameter $T$, while verification is efficient.

We use the Pietrzak VDF with $O(\log T)$ verification:

\begin{table}[ht]
\centering
\small
\begin{tabular}{@{}lll@{}}
\toprule
\textbf{Parameter} & \textbf{Value} & \textbf{Meaning} \\
\midrule
Modulus $N$ & RSA-2048 & Trusted setup \\
Time parameter $T$ & $2^{20}$ & $\sim$1M squarings \\
Proof size & $\sim$5 KB & 20 intermediate values \\
Verification ops & 40 & vs.\ 1M for recomputation \\
\bottomrule
\end{tabular}
\caption{VDF parameters.}
\label{tab:vdf}
\end{table}

The RSA-2048 modulus requires trusted setup; production deployments should use class group VDFs~\cite{wesolowski2019} which eliminate this requirement. Our implementation uses RSA for compatibility with existing libraries.

\subsection{Layer 2: External Anchors}

External anchoring provides third-party timestamp verification:

\textbf{OpenTimestamps (Bitcoin):} Root hash submitted to calendar servers, aggregated into Merkle tree~\cite{merkle1980}, embedded in Bitcoin transaction. After confirmation, inclusion proof demonstrates timestamp.

\textbf{RFC 3161 TSA:} TimeStampReq with SHA-256 message imprint submitted to timestamp authority; signed TimeStampResp returned.

Disputing requires alleging miner collusion or TSA compromise---specific allegations against third parties.

\subsection{Layer 3: Dual-Source Keystroke Validation}

Software injection (CGEventPost, SendInput) generates events in application-level streams but not hardware-level streams.

\textbf{Application Level:} Events through window server, including injected.

\textbf{Device Level:} Events from USB/Bluetooth HID, bypassing window server.

\begin{definition}[Validated Keystroke]
A keystroke is \emph{validated} if it appears in both streams within 50ms.
\end{definition}

The 50ms threshold reflects empirical USB HID polling intervals (1--8ms) plus OS scheduling jitter; legitimate keystrokes appear in both streams within 10ms under normal load.

Defeating dual-source validation requires kernel compromise ($C_3$).

\subsection{Layer 4: TPM/Secure Enclave}

Where available~\cite{tcg2011,coker2011}:
\begin{itemize}
    \item \textbf{Monotonic counter}: Prevents replay
    \item \textbf{Key sealing}: Secrets sealed to platform state
    \item \textbf{Attestation quotes}: Signed platform state
\end{itemize}

\subsection{Layer 5: Hash Chain Integrity}

Evidence stored in Merkle Mountain Range~\cite{crosby2009} (MMR):
$$\mathsf{leaf} = H(\mathtt{0x00} \| \mathsf{content\_hash} \| \mathsf{metadata\_hash})$$

Modifying any leaf breaks the chain; the adversary cannot update subsequent hashes without recomputing the entire tail.

\subsection{The Conjunction Barrier}

To dispute evidence, an adversary must allege:
\begin{enumerate}
    \item Jitter seal forged AND secret stolen in specific window
    \item VDF proofs invalid AND violation undetectable
    \item Hash chain modified AND hashes still verify
    \item Dual-source passed AND kernel compromised
    \item External anchors false AND Bitcoin/TSA colluded
\end{enumerate}

Each allegation names a different system, trust boundary, and time window.

\section{Security Analysis}
\label{sec:security}

\subsection{Residual Attacks}

\begin{table}[ht]
\centering
\small
\begin{tabular}{@{}lll@{}}
\toprule
\textbf{Attack} & \textbf{Prerequisites} & \textbf{Level} \\
\midrule
Environment compromise & Kernel before capture & $C_3$ \\
Clock manipulation & Admin + gap before anchor & $C_2$ \\
Key theft & Access within lifecycle & $C_2$--$C_3$ \\
VDF bypass & Faster hardware & $C_3$--$C_4$ \\
Anchor manipulation & Bitcoin 51\% or TSA & $C_4$ \\
Type AI content & Real-time typing & $C_1$ \\
\bottomrule
\end{tabular}
\caption{Residual attacks and required capabilities.}
\label{tab:attacks}
\end{table}

\textbf{Environment compromise.} A kernel-level adversary present before capture can falsify all observations. Disputing requires alleging a specific compromise mechanism---testable via forensic examination.

\textbf{Clock manipulation.} External anchors constrain windows. If Bitcoin confirms at $T_b$ and evidence claims $T_l < T_b$, manipulation must have occurred during $[T_l, T_b]$---bounded and specific.

\textbf{Typing AI content.} An attacker who types AI-generated text produces valid evidence. This is the economic security bound: forgery requires real-time typing.

\subsection{Tiered Security}

\begin{table}[ht]
\centering
\small
\begin{tabular}{@{}llll@{}}
\toprule
\textbf{Tier} & \textbf{Trust Assumption} & \textbf{Hardware} & \textbf{Defeats} \\
\midrule
1 & Machine trusted & None & $C_1$ \\
2 & OS untrusted & TPM/Enclave & $C_2$ \\
3 & Input untrusted & Attested keyboard & $C_3$ \\
4 & Nothing local & Network & All local \\
\bottomrule
\end{tabular}
\caption{Security tiers.}
\label{tab:tiers}
\end{table}

Full offline security (nothing local trusted) is impossible without hardware roots. With kernel access, an adversary can dump the session secret and forge evidence. Tier 2 is achievable with TPM 2.0 or Secure Enclave, present on $>$95\% of modern PCs.

\subsection{Worked Dispute Scenario}

A 47-minute document session. Challenger asserts ``this was fabricated.''

\begin{table}[ht]
\centering
\small
\begin{tabular}{@{}llll@{}}
\toprule
\textbf{Doubt} & \textbf{Forced Allegation} & \textbf{Level} & \textbf{Test} \\
\midrule
``Backdated'' & Clock set in window & $C_2$ & Anchor \\
``Forged'' & Secret stolen & $C_3$ & Forensics \\
``Chain fake'' & DB access + recompute & $C_3$ & Hash \\
``Anchors lie'' & TSA/BTC colluded & $C_4$ & Third-party \\
\bottomrule
\end{tabular}
\caption{Vague doubt becomes specific allegation.}
\label{tab:dispute}
\end{table}

\section{Evaluation}
\label{sec:eval}

\subsection{Jitter Seal Verification}

\begin{table}[ht]
\centering
\small
\begin{tabular}{@{}lll@{}}
\toprule
\textbf{Scenario} & \textbf{Trials} & \textbf{Outcome} \\
\midrule
Valid proof (baseline) & 1,000 & Verification succeeded \\
Fabricated jitter values & 10,000 & Verification failed \\
Mismatched document & 10,000 & Verification failed \\
Incorrect secret & 10,000 & Verification failed \\
\bottomrule
\end{tabular}
\caption{Jitter seal verification (31,000 trials).}
\label{tab:jitter-experiments}
\end{table}

All 1,000 valid proofs verified; all 30,000 attack trials failed. Forgery probability: $(1/2500)^{50} \approx 10^{-170}$.

\subsection{System Test Coverage}

\begin{table}[ht]
\centering
\small
\begin{tabular}{@{}lll@{}}
\toprule
\textbf{Component} & \textbf{Go Tests} & \textbf{Rust Tests} \\
\midrule
VDF computation/verification & 28 & 28 \\
MMR append/proof/verify & 16 & 12 \\
Checkpoint chain integrity & 18 & 16 \\
Evidence packet roundtrip & 24 & 31 \\
Key hierarchy \& ratcheting & 12 & 14 \\
TPM/Secure Enclave binding & 8 & 10 \\
Jitter seal verification & 24 & 18 \\
\midrule
\textbf{Total} & 320+ & 700+ \\
\bottomrule
\end{tabular}
\caption{Test coverage by component.}
\label{tab:tests}
\end{table}

\subsection{Performance}

\begin{table}[ht]
\centering
\small
\begin{tabular}{@{}lll@{}}
\toprule
\textbf{Operation} & \textbf{Time} & \textbf{Notes} \\
\midrule
Jitter computation & 286 ns & Per sample \\
Jitter injection & 500--3000\,$\mu$s & Intentional delay \\
Document hash (10 KB) & $\sim$85\,$\mu$s & SHA-256 \\
VDF checkpoint & $\sim$50 ms & Background \\
MMR append & $<$ 1 ms & Negligible \\
\bottomrule
\end{tabular}
\caption{Performance characteristics.}
\label{tab:perf}
\end{table}

The jitter range (0.5--3.0\,ms) is 5--27$\times$ below perception thresholds~\cite{card1983}.

\section{Limitations}
\label{sec:limitations}

\textbf{Content-agnostic.} The system attests to process, not cognitive origin. If an author types AI output, evidence shows genuine human typing.

\textbf{Requires cooperation.} Evidence cannot be generated retroactively.

\textbf{Secret management.} Secret compromise enables forgery; requires security comparable to private keys.

\textbf{Kernel-level adversaries.} A $C_3$ adversary present before capture defeats the system---but must make specific allegations to dispute.

\textbf{Labor markets.} Typing speed constraints bind individual humans; organized fraud can hire typists (linear cost scaling).

\section{Related Work}
\label{sec:related}

\textbf{Proof-of-X primitives.} Proof-of-work~\cite{nakamoto2008}, VDFs~\cite{boneh2018,pietrzak2019}, and proof-of-space~\cite{dziembowski2015} prove computational or storage resources. Proof-of-process requires physical interaction.

\textbf{Keystroke dynamics}~\cite{monrose2000,bergadano2002} focuses on authentication, not provenance. The jitter seal proves process rather than identifying individuals.

\textbf{Content watermarking}~\cite{kirchenbauer2023} embeds information in LLM output. The jitter seal watermarks the \emph{process}---timing, not text.

\textbf{Timestamping} \cite{haber1991,rfc3161,opentimestamps} proves existence at a time. Witnessd adds proof-of-process.

\textbf{Secure logging}~\cite{schneier1999} provides tamper evidence for sequential events. Witnessd adds external anchoring and proof-of-process.

\section{Conclusion}
\label{sec:conclusion}

We have presented the \emph{Adversarial Collapse Principle}: evidence succeeds when disputing it requires a conjunction of specific allegations against components with independent trust assumptions.

The jitter seal provides proof-of-process---cryptographic evidence that physical typing occurred. Witnessd combines jitter seals with VDFs, external anchors, dual-source validation, and TPM attestation. Each layer forces specific allegations at different capability levels.

The architecture does not prevent forgery. A sufficiently capable adversary can forge evidence. But ``it could have been faked'' is not an argument; ``the kernel was compromised via CVE-XXXX during the window from 14:00 to 14:47'' is.

This is the contribution: converting vague doubt into falsifiable allegations. In adversarial contexts, that difference matters.

\bibliographystyle{ACM-Reference-Format}

\begin{thebibliography}{99}

\bibitem{bellare1996}
M.~Bellare, R.~Canetti, and H.~Krawczyk.
\newblock Keying hash functions for message authentication.
\newblock In {\em CRYPTO}, pages 1--15, 1996.

\bibitem{bergadano2002}
F.~Bergadano, D.~Gunetti, and C.~Picardi.
\newblock User authentication through keystroke dynamics.
\newblock {\em ACM TISSEC}, 5(4):367--397, 2002.

\bibitem{boneh2018}
D.~Boneh, J.~Bonneau, B.~B{\"u}nz, and B.~Fisch.
\newblock Verifiable delay functions.
\newblock In {\em CRYPTO}, pages 757--788, 2018.

\bibitem{card1983}
S.~K. Card, T.~P. Moran, and A.~Newell.
\newblock {\em The Psychology of Human-Computer Interaction}.
\newblock Lawrence Erlbaum Associates, 1983.

\bibitem{coker2011}
G.~Coker et al.
\newblock Principles of remote attestation.
\newblock {\em Int. J. Information Security}, 10(2):63--81, 2011.

\bibitem{crosby2009}
S.~A. Crosby and D.~S. Wallach.
\newblock Efficient data structures for tamper-evident logging.
\newblock In {\em USENIX Security}, pages 317--334, 2009.

\bibitem{dziembowski2015}
S.~Dziembowski, S.~Faust, V.~Kolmogorov, and K.~Pietrzak.
\newblock Proofs of space.
\newblock In {\em CRYPTO}, pages 585--605, 2015.

\bibitem{haber1991}
S.~Haber and W.~S. Stornetta.
\newblock How to time-stamp a digital document.
\newblock {\em Journal of Cryptology}, 3(2):99--111, 1991.

\bibitem{kirchenbauer2023}
J.~Kirchenbauer et al.
\newblock A watermark for large language models.
\newblock In {\em ICML}, 2023.

\bibitem{merkle1980}
R.~C. Merkle.
\newblock Protocols for public key cryptosystems.
\newblock In {\em IEEE S\&P}, pages 122--134, 1980.

\bibitem{monrose2000}
F.~Monrose and A.~D. Rubin.
\newblock Keystroke dynamics as a biometric for authentication.
\newblock {\em Future Generation Computer Systems}, 16(4):351--359, 2000.

\bibitem{nakamoto2008}
S.~Nakamoto.
\newblock Bitcoin: A peer-to-peer electronic cash system.
\newblock 2008.

\bibitem{opentimestamps}
P.~Todd.
\newblock OpenTimestamps.
\newblock \url{https://opentimestamps.org}, 2016.

\bibitem{pietrzak2019}
K.~Pietrzak.
\newblock Simple verifiable delay functions.
\newblock In {\em ITCS}, pages 60:1--60:15, 2019.

\bibitem{rfc3161}
C.~Adams et al.
\newblock Time-Stamp Protocol (TSP).
\newblock RFC 3161, 2001.

\bibitem{schneier1999}
B.~Schneier and J.~Kelsey.
\newblock Secure audit logs to support computer forensics.
\newblock {\em ACM TISSEC}, 2(2):159--176, 1999.

\bibitem{tcg2011}
{Trusted Computing Group}.
\newblock TPM Main Specification Level 2, Version 1.2.
\newblock 2011.

\bibitem{wesolowski2019}
B.~Wesolowski.
\newblock Efficient verifiable delay functions.
\newblock In {\em EUROCRYPT}, pages 379--407, 2019.

\end{thebibliography}
\let\balance\relax

\end{document}